\documentclass[a4paper,11pt]{quantumarticle}
\pdfoutput=1
\usepackage[utf8]{inputenc}
\usepackage[english]{babel}
\usepackage[T1]{fontenc}
\usepackage{amsmath,amssymb,amsfonts}
\usepackage{amsthm}
\usepackage{booktabs}
\usepackage{graphicx}
\usepackage{bm}
\usepackage{hyperref}

\newtheorem{lemma}{Lemma}
\newtheorem{proposition}{Proposition}
\theoremstyle{definition}

\newcommand{\F}{\mathbb{F}}
\newcommand{\vx}{\boldsymbol{x}}
\newcommand{\ve}{\boldsymbol{e}}
\newcommand{\vs}{\boldsymbol{s}}
\newcommand{\vu}{\boldsymbol{u}}
\newcommand{\vr}{\boldsymbol{r}}
\newcommand{\vlam}{\boldsymbol{\lambda}}
\newcommand{\vtheta}{\boldsymbol{\theta}}
\newcommand{\Z}{\mathrm{Z}}
\newcommand{\PP}{\mathbb{P}}

\begin{document}

\title{Certified decoding of quantum LDPC codes}

\author{Ragavi Krishnamoorthy}
\affiliation{Hybrid Intelligence, Fraunhofer IAIS}
\author{Florian Gerhardt}
\author{Johannes Knaute}
\affiliation{Actuarial Risk Modeling, Quantum Computing Algorithms \& Applications, PricewaterhouseCoopers GmbH} %Wirtschaftspr\"ufungsgesellschaft}
\author{Thomas Klir}
\affiliation{Cyber Security, PricewaterhouseCoopers GmbH} %Wirtschaftspr\"ufungsgesellschaft}
\author{Stefan Raimund Maschek}
\author{Erik Schulze}
\affiliation{Quantum Computing, Bull GmbH}
\author{Tomislav Maras}
\author{Alexander Dotterweich}
\affiliation{Actuarial Risk Modeling, Quantum Computing Algorithms \& Applications, PricewaterhouseCoopers GmbH} %Wirtschaftspr\"ufungsgesellschaft}
\author{Loong Kuan Lee}
\author{Christian Bauckhage}
\author{Nico Piatkowski}
\affiliation{Hybrid Intelligence, Fraunhofer IAIS}

\begin{abstract}
Quantum low-density parity-check (qLDPC) codes reduce the qubit overhead of
fault-tolerant quantum computation by an order of magnitude, but their
decoding is harder than its classical counterpart: because many physical
errors are equivalent up to stabilizers, the degenerate maximum-likelihood
(ML) decoder must compare the probabilities of entire \emph{equivalence
classes} of errors, that is, partition functions, rather than single errors. The
workhorse decoder BP+OSD sidesteps degeneracy heuristically and offers no
guarantees. We treat degenerate decoding as probabilistic inference in an
undirected graphical model: the probability of each logical class is the
partition function of an \emph{unconstrained, strictly positive} Markov
random field over the code's check variables, a construction that
generalizes the random-bond Ising mapping of the surface code to arbitrary
CSS codes and to spacetime decoding with measurement errors and
circuit-level noise. On this model we build two decoders. The first
estimates all class partition functions by annealed importance sampling
with common random numbers and attaches to every decision a
\emph{certificate} of optimality: a paired bootstrap test, or, composed
with constant-factor estimators such as WISH, an exact optimality proof.
The second is region-based: the \emph{Bethe free energy}, whose bias
cancels between classes, reproduces exact ML decoding on every tested
surface-code instance at millisecond cost, and enlarging the regions to
elimination clusters makes exact degenerate ML decoding of the
$[\![72,12,6]\!]$ bivariate bicycle code feasible. Across surface codes
and the bivariate bicycle codes $[\![72,12,6]\!]$ and
$[\![144,12,12]\!]$, under code-capacity, phenomenological, and
circuit-level noise, the sampling decoder matches or exceeds BP+OSD while
certifying the bulk of its decisions, and the certificate flags exactly
the syndromes on which any fast decoder should be distrusted.
\end{abstract}

\maketitle

\section{Introduction}
\label{sec:intro}
Large-scale quantum computation requires quantum error correction (QEC),
because physical qubit error rates ($\sim 10^{-3}$--$10^{-2}$ in current
hardware) exceed the requirements of useful algorithms by many orders of
magnitude. For over a decade, the surface code~\cite{kitaev2003, dennis2002,
fowler2012} has been the de facto standard: it tolerates comparatively high
physical error rates ($\sim 1\%$ threshold) and requires only nearest-neighbor
interactions on a two-dimensional (2D) qubit lattice. Its main drawback is
overhead: a single well-protected logical qubit may consume on the order of
$10^3$ physical qubits. Quantum low-density parity-check (qLDPC) codes attack
precisely this bottleneck. By encoding many logical qubits into a single
sparse code block, they promise a reduction of the physical-to-logical qubit
ratio by an order of magnitude or more~\cite{gottesman2014, breuckmann2021,
panteleev2022good, leverrier2022}, with the bivariate bicycle (BB) codes of
Bravyi \emph{et al.}~\cite{bravyi2024} as the currently most prominent
hardware-motivated construction.

The price of dense encoding is paid, in part, at the decoder.
Decoding a qLDPC code is markedly harder than decoding either classical LDPC
codes or the surface code, for a reason that is structural rather than
incidental: quantum codes are \emph{degenerate}. Errors that differ by a
stabilizer act identically on the code space, so the optimal
\emph{degenerate maximum-likelihood} (ML) decoder must identify the most
probable \emph{equivalence class} of errors, a sum over exponentially many
configurations, rather than the most probable single error. The de facto
standard decoder, belief propagation with ordered-statistics post-processing
(BP+OSD)~\cite{panteleev2021bposd, roffe2020}, handles degeneracy
heuristically: it estimates single-error marginals by loopy message passing
and repairs their inconsistencies by a greedy matrix-elimination step. It
performs well in practice but offers no optimality statement, no error bound,
and no indication of which of its decisions are trustworthy.

This paper develops a decoder that takes the structure of the problem
seriously. Our starting point is the observation, made precise in
Section~\ref{sec:pgm}, that syndrome decoding \emph{is} probabilistic
inference in an undirected graphical model (a Markov random field, MRF)
whose factor graph is the code's Tanner graph, and that degeneracy moves the
relevant inference task from maximum-a-posteriori estimation to
\emph{partition-function evaluation}. Partition-function estimation is a
mature subfield of probabilistic machine learning, with estimators whose
statistical behavior is well understood~\cite{neal2001, ermon2013,
wainwright2008}. We import this machinery into qLDPC decoding.

\paragraph{Contributions.}
\begin{itemize}
\item \emph{The coset MRF.} The probability of each logical class is the
partition function of an \emph{unconstrained, strictly positive} MRF over
the code's $X$-check variables (Proposition~\ref{prop:cosetmrf}). The
construction applies to any CSS code and generalizes the classical mapping
of surface-code decoding to the random-bond Ising
model~\cite{dennis2002} (for the surface code it reproduces exactly that
Ising model, for BB codes it yields a three-body analogue), and it extends
verbatim to spacetime decoding problems with faulty measurements and
circuit-level noise (Section~\ref{sec:spacetime}). Crucially, the hard
parity constraints of the syndrome posterior are eliminated by the
parametrization, so every standard sampler and message-passing scheme
applies without modification.
\item \emph{A certified sampling decoder.} Annealed importance sampling
(AIS)~\cite{neal2001} with \emph{common random numbers}
(CRN)~\cite{robert2004} across classes estimates all class partition
functions on one shared randomness stream, turning the decision into a
\emph{paired} comparison. A paired bootstrap test on the coupled estimates
certifies each decision; composed with constant-factor estimators such as
WISH~\cite{ermon2013}, whose optimization oracles are QUBOs, hence native
to quantum annealers~\cite{kadowaki1998}, the certificate becomes an exact
optimality proof (Lemma~\ref{lem:cert}). We also report two natural
``sharper'' ratio estimators that \emph{fail} instructively: direct
inter-class annealing and Bennett's acceptance ratio both collapse on the
low-overlap coset pairs, pinning down why the anneal through infinite
temperature is the right bridge (Section~\ref{sec:cert}).
\item \emph{Region-based decoders, from Bethe to exact.} The Bethe free
energy of each coset MRF, computed by loopy belief propagation in a tanh
parametrization, is a biased estimate of $\log\Z_{\vlam}$, but the bias
is shared across classes and cancels in the comparison: the Bethe decoder
reproduces the exact ML decision on every tested surface-code instance at
millisecond cost (Section~\ref{sec:bethe}). One level up, Kikuchi-size
regions realized by (mini-)bucket elimination~\cite{dechter2003} exploit a
structural windfall: the coset graph of the $[\![72,12,6]\!]$ BB code has
induced width $\approx23$, so \emph{exact degenerate ML decoding of a
qLDPC code} becomes feasible, and bounded clusters yield upper bounds
that, paired with configuration lower bounds, give \emph{deterministic}
optimality certificates and provable class pruning
(Section~\ref{sec:regions}).
\item \emph{Validation from code capacity to circuit level.} We evaluate
against \emph{exact} degenerate ML decoding on rotated surface codes, and
against BP+OSD~\cite{panteleev2021bposd, roffe2020} and
MWPM~\cite{higgott2022} on the BB codes $[\![72,12,6]\!]$ and
$[\![144,12,12]\!]$~\cite{bravyi2024} and on spacetime problems
(phenomenological noise for the BB code and stim-generated circuit-level
noise for the surface code), reporting logical error rates, certified
fractions, calibration of the certificates against ground truth, and
runtimes (Section~\ref{sec:experiments}).
\end{itemize}

Related approaches to degeneracy-aware decoding exist for the surface code,
where 2D structure makes coset sums accessible to tensor-network
contraction~\cite{bravyi2014ml} and to tailored Metropolis
dynamics~\cite{hutter2014}. Our construction requires no geometric
structure, only sparsity, and therefore covers the expander-like codes at
the center of current fault-tolerance road maps; its guarantees come from
the estimator, not from the code family.

The paper is self-contained. Sections~\ref{sec:classical}--\ref{sec:qldpc}
review the three ingredients (classical LDPC codes, undirected graphical
models, and the qLDPC program), and Section~\ref{sec:challenges} explains
why decoding is the binding constraint. Sections~\ref{sec:method}
and~\ref{sec:experiments} develop and evaluate the proposed decoders.
Section~\ref{sec:discussion} concludes.

\section{Classical LDPC Codes}
\label{sec:classical}
A binary linear code $\mathcal{C} \subseteq \F_2^n$ is defined by a
parity-check matrix $H \in \F_2^{m \times n}$ via
$\mathcal{C} = \{ \vx \in \F_2^n : H\vx = 0 \}$. The code has parameters
$[n, k, d]$, where $k = n - \mathrm{rank}(H)$ is the number of encoded bits
and $d$ is the minimum Hamming weight of a nonzero codeword. A code family is
called \emph{low-density parity-check} (LDPC) if $H$ is sparse: every row and
every column contains at most a constant number of ones, independent of $n$.
Equivalently, in the bipartite \emph{Tanner graph} connecting bit nodes to
check nodes, all node degrees are bounded by a constant.

LDPC codes were introduced by Gallager in 1962~\cite{gallager1962} and
rediscovered in the 1990s~\cite{mackay1999}, when it became clear that random
sparse codes decoded with iterative message passing (belief propagation, BP)
operate remarkably close to the Shannon limit~\cite{richardson2001}. Two
properties drive their practical dominance in modern communication standards
(e.g., 5G-NR and IEEE 802.11): (i)~sparsity makes each decoding iteration
cheap and massively parallelizable, which is why LDPC decoders are a classic
target for FPGA and ASIC implementation; and (ii)~random sparse constructions
achieve linear distance $d = \Theta(n)$ at constant rate $k/n$, i.e., they
are asymptotically \emph{good}.

The central question of the qLDPC program is whether these two properties,
sparse checks and good asymptotic parameters, can coexist in the quantum
setting, where the answer turned out to be far less obvious.

\section{Undirected Probabilistic Graphical Models}
\label{sec:pgm}
Iterative decoding is best understood as \emph{probabilistic inference}. This
section fixes notation for undirected probabilistic graphical models, also
called Markov random fields (MRFs)~\cite{pearl1988, koller2009,
wainwright2008}, in which the decoding problems appearing throughout this
paper are instances of standard inference tasks.

\subsection{Markov Random Fields and Inference}
An MRF over variables $X_1,\dots,X_N$ with $X_v\in\{0,\dots,Y_v-1\}$ and
clique set $\mathcal{C}$ ($m=|\mathcal{C}|$) factorizes as
$\PP(\vx)=\tfrac1\Z\prod_{C\in\mathcal{C}}\psi_C(\vx_C)$ with clique
potentials $\psi_C\ge0$, where $\vx_C$ denotes the restriction of $\vx$ to
the variables in $C$. Writing $\psi_C=\exp(\vtheta_C)$ in log-domain, the
\emph{partition function}
\begin{equation}
\Z=\sum_{\vx}\prod_{C\in\mathcal{C}}\psi_C(\vx_C)
 =\sum_{\vx}\exp\!\Big(\textstyle\sum_{C\in\mathcal{C}}\vtheta_C(\vx_C)\Big)
\label{eq:Z}
\end{equation}
normalizes the distribution. The three canonical inference tasks are
(i)~\emph{marginal} inference, computing $\PP(x_v)$ or, more generally, the
probability of a subset of variables by summing out the rest;
(ii)~\emph{maximum-a-posteriori} (MAP) inference,
$\vx^\star=\arg\max_{\vx}\PP(\vx)$; and (iii)~evaluation of $\Z$ itself.
All three are computationally hard in general: computing $\Z$ is
\#P-hard, and even approximate marginal inference is intractable in the
worst case~\cite{roth1996}. Exact inference via the junction-tree algorithm
runs in time exponential in the \emph{treewidth} of the underlying
graph~\cite{lauritzen1988}, which confines it to suitably sparse, tree-like
models; beyond that regime one resorts to Monte-Carlo
sampling~\cite{geman1984} or to variational
approximations~\cite{wainwright2008}.

\subsection{Factor Graphs and Belief Propagation}
The factorization is conveniently drawn as a \emph{factor graph}: a
bipartite graph with one variable node per $X_v$, one factor node per clique
$C$, and an edge whenever $X_v\in C$~\cite{kschischang2001}. Belief
propagation (BP)~\cite{pearl1988} computes marginals by passing messages
along its edges,
\begin{align}
\mu_{C\to v}(x_v) &= \sum_{\vx_{C\setminus v}}\psi_C(\vx_C)
  \prod_{u\in C\setminus v}\mu_{u\to C}(x_u),
\label{eq:bp}\\
\mu_{v\to C}(x_v) &= \prod_{C'\ni v,\,C'\neq C}\mu_{C'\to v}(x_v),
\nonumber
\end{align}
with marginals obtained as normalized products of incoming messages. On
tree-structured factor graphs BP is exact; on graphs with cycles, iterating
\eqref{eq:bp} to a fixed point (\emph{loopy} BP) yields approximate
marginals that correspond to stationary points of the Bethe free
energy~\cite{yedidia2005}. Replacing the sum in~\eqref{eq:bp} by a
maximization gives max-product BP, the analogous approximation for MAP
inference.

\subsection{Decoding as Inference}
\label{sec:decinf}
Syndrome decoding is precisely such an inference problem. Let
$\ve\in\F_2^n$ be the error of a memoryless channel with flip
probability $p$, so that the prior factorizes into unary potentials
$\psi_v(e_v)=p^{e_v}(1-p)^{1-e_v}$, and let $\vs=H\ve$ be the observed
syndrome. Conditioning on $\vs$ attaches to each check $c$ the hard factor
$\psi_c(\ve_c)=\mathbb{1}[\bigoplus_{v\in c}e_v=s_c]$, yielding the
posterior
\begin{equation}
\PP(\ve\mid\vs)=\frac1{\Z(\vs)}\prod_{v=1}^{n}\psi_v(e_v)
\prod_{c=1}^{m}\psi_c(\ve_c),
\label{eq:posterior}
\end{equation}
an MRF whose factor graph \emph{is} the Tanner graph of the code. Bitwise
MAP decoding ($\arg\max_{e_v}\PP(e_v\mid\vs)$ for each $v$) is marginal
inference in~\eqref{eq:posterior}, and iterative LDPC decoding, from
Gallager's original algorithm~\cite{gallager1962} onward, is loopy BP on
this model~\cite{mceliece1998}; blockwise MAP decoding is MAP inference.
The sparsity that defines LDPC codes is exactly the sparsity that makes
each BP iteration cheap.

The quantum setting adds a characteristic twist. Because errors that differ
by a stabilizer act identically on the code space, the optimal
(\emph{degenerate maximum-likelihood}) decoder does not seek the single most
probable error but the most probable \emph{equivalence class}: it maximizes
the total posterior mass $\sum_{\ve\in\bar\ve+\mathcal{S}}\PP(\ve\mid\vs)$
of a coset, a sum over exponentially many configurations, that is, a
constrained \emph{partition function}~\eqref{eq:Z} rather than a point
estimate. For the surface code this coset sum is the partition function of a
random-bond Ising model~\cite{dennis2002}. Degeneracy thus moves quantum
decoding from MAP inference to the strictly harder task of
partition-function evaluation, a fact that resurfaces as a central decoding
obstacle in Section~\ref{sec:challenges} and that our method
(Section~\ref{sec:method}) confronts head-on.

\section{Stabilizer Codes and the CSS Construction}
\label{sec:stabilizer}
Quantum error correction must protect against both bit-flip ($X$) and
phase-flip ($Z$) errors without measuring, and thereby destroying, the
encoded state. The stabilizer formalism~\cite{gottesman1997} achieves this by
defining the code space as the joint $+1$ eigenspace of an abelian group
$\mathcal{S}$ of $n$-qubit Pauli operators. Measuring a generating set of
$\mathcal{S}$ yields a classical \emph{syndrome} that reveals information
about errors while leaving the logical state intact. An
$[\![n, k, d]\!]$ stabilizer code encodes $k$ logical qubits into $n$
physical qubits; $d$ is the minimum weight of a Pauli operator that commutes
with all stabilizers without being a stabilizer itself, i.e., of a nontrivial
logical operator.

The Calderbank--Shor--Steane (CSS) construction~\cite{calderbank1996,
steane1996} builds a stabilizer code from two classical codes with
parity-check matrices $H_X$ and $H_Z$ satisfying the orthogonality condition
\begin{equation}
  H_X H_Z^{\mathsf{T}} = 0 ,
  \label{eq:css}
\end{equation}
which guarantees that all $X$-type and $Z$-type stabilizers commute. The
$X$-checks detect phase-flip errors and the $Z$-checks detect bit-flip
errors, so decoding partially reduces to two classical decoding problems,
with the important caveat of \emph{degeneracy}: many distinct physical errors
are equivalent up to stabilizers and need not be distinguished, which
classical decoders are not designed to exploit.

A stabilizer code family is \emph{qLDPC} if every stabilizer generator acts
on at most a constant number of qubits and every qubit participates in at most
a constant number of generators~\cite{breuckmann2021}. Sparsity is even more
critical here than classically: syndrome extraction is an active quantum
circuit, and high-weight stabilizers require deep measurement circuits that
themselves spread errors, degrading fault tolerance.

Constraint~\eqref{eq:css} is precisely what makes good quantum codes hard to
find. Random sparse matrices, the workhorse of classical LDPC theory,
essentially never satisfy it; qLDPC constructions therefore require heavy
algebraic or topological structure, and this structure tends to depress
either the rate or the distance.

\section{From the Surface Code to Quantum LDPC Codes}
\label{sec:qldpc}

\subsection{The Surface Code as a qLDPC Code}
\label{sec:surface}
The surface code~\cite{kitaev2003, dennis2002, fowler2012} is itself a CSS
qLDPC code: all stabilizers have weight four and each qubit participates in
at most four checks. Its Tanner graph is, moreover, \emph{geometrically
local} in 2D, which is the key to its popularity on planar superconducting
chips. It also enjoys a high circuit-level threshold near $1\%$ and an
efficient, well-understood decoder (minimum-weight perfect matching on the
syndrome graph).

Its parameters, however, are poor: a distance-$d$ surface code patch uses
$n = \Theta(d^2)$ physical qubits to encode $k = 1$ logical qubit, i.e.,
\begin{equation}
  k = 1, \qquad d = \Theta(\sqrt{n}), \qquad \frac{k}{n} \to 0 .
\end{equation}
Suppressing logical error rates to the $10^{-10}$--$10^{-12}$ regime needed
for large algorithms requires $d \approx 25$--$35$, hence roughly
$10^3$ physical qubits per logical qubit before accounting for magic-state
distillation. This vanishing rate is not an accident of the construction:
Bravyi, Poulin, and Terhal proved that \emph{any} 2D-local stabilizer code
obeys
\begin{equation}
  k d^2 = O(n),
  \label{eq:bpt}
\end{equation}
so the surface code is essentially optimal within its locality
class~\cite{bravyi2010}. Better parameters therefore \emph{require}
long-range connectivity.

\subsection{Asymptotically Good qLDPC Codes}
\label{sec:good}
For nearly twenty years after the CSS construction, the best known qLDPC
distance scaling barely exceeded $\sqrt{n}$. The \emph{hypergraph product} of
Tillich and Z\'emor~\cite{tillich2014} was a milestone: it turns any pair of
classical codes into a CSS code satisfying~\eqref{eq:css} by construction and
yields constant-rate families with $d = \Theta(\sqrt{n})$: constant rate,
but still square-root distance. Gottesman showed that constant-rate qLDPC
codes would enable fault-tolerant computation with \emph{constant} space
overhead~\cite{gottesman2014}, sharpening the motivation to push distance
further.

The breakthrough came in rapid succession. Panteleev and Kalachev introduced
\emph{lifted product} codes, first achieving almost-linear
distance~\cite{panteleev2021distance} and then proving the existence of
asymptotically good qLDPC codes with $k = \Theta(n)$ and $d = \Theta(n)$
via lifted products of Tanner codes over non-abelian group
algebras~\cite{panteleev2022good}. Shortly thereafter, Leverrier and Z\'emor
gave an elegant alternative construction, the \emph{quantum Tanner
codes}~\cite{leverrier2022}, based on expander graphs equipped with local
codes on squares of a group. Conceptually, both constructions replace
geometric locality with \emph{expansion}: the Tanner graph is sparse yet
highly connected, so that small errors cannot conspire to form low-weight
logical operators. The price is exactly the long-range connectivity
forbidden by~\eqref{eq:bpt}~\cite{baspin2022}. The good constructions are,
at present, primarily existence results, which motivates the
intermediate-scale constructions of the next subsection.

\subsection{Bivariate Bicycle Codes}
\label{sec:bb}
The bivariate bicycle (BB) codes of Bravyi \emph{et
al.}~\cite{bravyi2024} are the currently most prominent attempt to bring
qLDPC advantages into the practically relevant regime of hundreds of qubits.
BB codes are CSS codes defined by two sparse polynomials $A$ and $B$ over the
group algebra of $\mathbb{Z}_\ell \times \mathbb{Z}_m$, with check matrices
$H_X = [A \mid B]$ and $H_Z = [B^{\mathsf{T}} \mid A^{\mathsf{T}}]$, which
satisfies~\eqref{eq:css} because $A$ and $B$ commute. All stabilizers have
weight six, and each qubit couples to six others, only marginally more
demanding than the surface code's degree four.

The flagship instance is the $[\![144, 12, 12]\!]$ ``gross''
code\footnote{The name derives from the unit \emph{gross} $=144$.}: 144 data
qubits (plus 144 ancilla qubits for syndrome extraction) encode 12 logical
qubits at distance 12. Circuit-level simulations show a pseudo-threshold near
$0.7\%$, and at a physical error rate of $10^{-3}$ the code reaches logical
error rates for which a surface-code implementation of equal performance
would require roughly ten times as many physical qubits~\cite{bravyi2024}.
Crucially, the required connectivity, while not 2D-local, is
\emph{almost} planar: the Tanner graph decomposes into two planar layers plus
a bounded number of long-range couplers per qubit, a topology considered
realistic for superconducting chips with through-substrate vias or
long-range resonators. The headline factor-of-ten saving applies to
\emph{quantum memory}; the comparison becomes less clear-cut once logical
operations are included, and the twelve logical qubits of a gross-code block
share correlated failure mechanisms.

\section{The Decoding Bottleneck}
\label{sec:challenges}
\label{sec:decoding}
Deploying qLDPC codes faces three obstacles. Two of them are outside the
scope of this paper and we only note them: the long-range \emph{connectivity}
mandated by~\eqref{eq:bpt} must be engineered or provided
natively~\cite{bravyi2010, baspin2022, cohen2022, xu2024}, and
fault-tolerant \emph{logical gates} on densely packed code blocks are far
less mature than surface-code lattice surgery~\cite{cohen2022, bravyi2024}.
The third obstacle is decoding, and it is the subject of everything that
follows.

Decoding qLDPC codes is markedly harder than decoding either classical LDPC
codes or the surface code. Plain BP fails on quantum Tanner graphs for two
reasons: unavoidable short cycles (four-cycles are built into CSS
constructions) and error degeneracy, which splits the posterior probability
mass over many equivalent errors and prevents BP from converging to any one
of them. In the language of Section~\ref{sec:pgm}: the syndrome-conditioned
posterior~\eqref{eq:posterior} has many symmetric modes, so its sum-product
marginals are uninformative, and the quantity a degenerate decoder must
compare is a coset partition function rather than a MAP configuration.
The current standard is BP with ordered-statistics post-processing
(BP+OSD)~\cite{panteleev2021bposd, roffe2020}: when BP fails to converge, an
OSD step solves a most-likely-error problem on a reliability-sorted, reduced
system via Gaussian elimination. BP+OSD performs well across the qLDPC
landscape but its post-processing has cubic worst-case cost, which conflicts
with the real-time requirement that syndrome information be processed within
microseconds on superconducting hardware, and it comes with no optimality
or error guarantee whatsoever.

\section{Degenerate ML Decoding as Partition-Function Estimation}
\label{sec:method}
We now develop the proposed decoders. Throughout, we consider a CSS code with
check matrices $H_X\in\F_2^{m_X\times n}$, $H_Z\in\F_2^{m_Z\times n}$ and
independent bit-flip ($X$-type) noise of rate $p<\tfrac12$; the phase-flip
problem is identical with the roles of $H_X$ and $H_Z$ exchanged.

\subsection{The Decision Problem}
\label{sec:decision}
An error $\ve\in\F_2^n$ with prior
$\PP(\ve)=(1-p)^n w^{|\ve|}$, $w = p/(1-p)$, produces the syndrome
$\vs=H_Z\ve$. Fix any particular solution $\ve_0$ of $H_Z\ve_0=\vs$ and a
basis $L_X\in\F_2^{k\times n}$ of logical $X$ representatives (a basis of
$\ker H_Z$ modulo the rowspace of $H_X$). The solution set of the syndrome
equation decomposes into $2^k$ logical classes with representatives
$\ve_{\vlam}=\ve_0\oplus\vlam^{\mathsf{T}}L_X$, $\vlam\in\F_2^k$, each class
being the coset $\ve_{\vlam}\oplus\mathrm{rowspace}(H_X)$. The degenerate ML
decoder returns
\begin{equation}
\hat{\vlam}=\arg\max_{\vlam\in\F_2^k}\Z_{\vlam}(\vs),
\;\;
\Z_{\vlam}(\vs)=\sum_{\vu\in\F_2^{m_X}}
   w^{|\ve_{\vlam}\oplus\vu^{\mathsf{T}}H_X|},
\label{eq:cosetZ}
\end{equation}
and any representative of the winning class is an optimal correction: the
decoder succeeds if and only if the residual $\ve\oplus\hat\ve$ is an
$X$-stabilizer. (If $H_X$ contains redundant rows, every coset element is
counted $2^{m_X-\mathrm{rank}\,H_X}$ times in~\eqref{eq:cosetZ}; this
constant is identical across classes and cancels from all comparisons.)

\subsection{The Coset MRF}
\label{sec:cosetmrf}
The sum~\eqref{eq:cosetZ} is not merely analogous to~\eqref{eq:Z}; it is
literally a partition function of an MRF, and of a remarkably benign one.

\begin{proposition}[Coset probabilities are positive MRF partition functions]
\label{prop:cosetmrf}
Fix a class representative $\vr=\ve_{\vlam}$ and introduce one binary variable
$u_c$ per $X$-check, $c=1,\dots,m_X$. For each qubit $v$ let
$N(v)=\{c: (H_X)_{cv}=1\}$ and define the clique potential
\begin{equation}
\psi_v(\vu_{N(v)})
 = w^{\,r_v\,\oplus\,\bigoplus_{c\in N(v)}u_c}\ \in\{w,1\}.
\label{eq:cosetpot}
\end{equation}
Then $\Z_{\vlam}(\vs)=\sum_{\vu}\prod_{v=1}^{n}\psi_v(\vu_{N(v)})$ is the
partition function~\eqref{eq:Z} of the MRF with variables
$u_1,\dots,u_{m_X}$ and clique set $\{N(v):v=1,\dots,n\}$. Its potentials
are strictly positive, and the model is \emph{unconstrained}: every
$\vu\in\F_2^{m_X}$ is a valid configuration.
\end{proposition}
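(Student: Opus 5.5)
The proof is a direct substitution into~\eqref{eq:cosetZ}, followed by a componentwise rewriting of the Hamming weight. Fix $\vr=\ve_{\vlam}$ and $\vu\in\F_2^{m_X}$. The $v$-th component of $\vu^{\mathsf{T}}H_X$ is $\bigoplus_{c}u_c(H_X)_{cv}=\bigoplus_{c\in N(v)}u_c$. Hence the $v$-th bit of the coset element $\vr\oplus\vu^{\mathsf{T}}H_X$ is exactly the exponent $r_v\oplus\bigoplus_{c\in N(v)}u_c$ appearing in~\eqref{eq:cosetpot}.

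Next, the Hamming weight is a sum of these bits taken over the integers, so $w^{|\vr\oplus\vu^{\mathsf{T}}H_X|}=\prod_{v=1}^n w^{(\vr\oplus\vu^{\mathsf{T}}H_X)_v}=\prod_{v}\psi_v(\vu_{N(v)})$. Summing over all $\vu\in\F_2^{m_X}$ reproduces~\eqref{eq:cosetZ} term by term. This is of the form~\eqref{eq:Z} with variables $u_1,\dots,u_{m_X}$ (so $Y_c=2$) and clique set $\{N(v)\}$. Each $\psi_v$ depends on $\vu$ only through $\vu_{N(v)}$, so it is a legitimate clique potential. If $N(v)=\emptyset$, the factor is a constant $w^{r_v}$, which can be absorbed as a global constant. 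If $N(v)=N(v')$ for $v\neq v'$, the two factors are just two potentials on the same clique, which~\eqref{eq:Z} allows, or they can be multiplied together.

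Strict positivity holds because each exponent lies in $\{0,1\}$, so $\psi_v\in\{1,w\}$, and $w=p/(1-p)>0$ for $0<p<\tfrac12$. Unconstrainedness holds because no indicator factors appear: every $\vu\in\F_2^{m_X}$ gives a product of positive numbers, hence carries positive weight. In contrast to~\eqref{eq:posterior}, the syndrome constraint has been absorbed into the choice of $\vr$, since $H_Z\vr=\vs$ and $H_Z(\vu^{\mathsf{T}}H_X)^{\mathsf{T}}=H_Z H_X^{\mathsf{T}}\vu=0$ by~\eqref{eq:css}. So $\vu\mapsto\vr\oplus\vu^{\mathsf{T}}H_X$ parametrizes the coset with no further constraints.

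No step is a genuine obstacle. The only point needing care is conceptual: the sum over $\vu$ counts each coset element $2^{m_X-\mathrm{rank}H_X}$ times when $H_X$ has redundant rows. I would note that the proposition identifies $\Z_{\vlam}(\vs)$ with the sum as \emph{defined} in~\eqref{eq:cosetZ}, which already includes this multiplicity. It therefore holds exactly, and the multiplicity is irrelevant for the decision, as remarked after~\eqref{eq:cosetZ}.
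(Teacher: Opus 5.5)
Your proof is correct and is the paper's own argument: identify the $v$-th bit of $\vr\oplus\vu^{\mathsf{T}}H_X$ as $r_v\oplus\bigoplus_{c\in N(v)}u_c$, then factorize $w^{|\cdot|}$ over qubits into the clique potentials. Your additional remarks are sound details the paper's one-line proof leaves implicit: empty or repeated neighborhoods, the need for $p>0$ to get strict positivity, and the row-redundancy multiplicity already built into the definition of $\Z_{\vlam}(\vs)$.
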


\begin{proof}
Immediate from~\eqref{eq:cosetZ}: the coset element associated with $\vu$ is
$\vr\oplus\vu^{\mathsf{T}}H_X$, whose $v$-th bit is
$r_v\oplus\bigoplus_{c\in N(v)}u_c$, and its prior weight
$w^{|\cdot|}$ factorizes over qubits into the potentials~\eqref{eq:cosetpot}.
\end{proof}

Three structural remarks. First, the factor graph of the coset MRF is the
\emph{transpose} of the Tanner graph of $H_X$: variables are checks, factors
are qubits, and the clique size equals the qubit's check-degree: at most
$2$ for the surface code (giving exactly the random-bond Ising model
of~\cite{dennis2002}) and $3$ for BB codes (a three-body Ising analogue).
Sparsity of the code is inherited as sparsity of the inference problem.
Second, and in sharp contrast to the syndrome
posterior~\eqref{eq:posterior}, the coset MRF has \emph{no hard
constraints}: the parity checks were absorbed into the parametrization. The
rugged, disconnected landscape that traps single-site samplers on
constrained models is replaced by a smooth landscape on which any standard
MCMC kernel is irreducible. Third, the potentials are strictly positive, so
the entire nonnegative-inference toolbox applies without further
transformation; the negativity obstacle that governs, e.g., quasiprobability
methods in quantum simulation is absent here by construction.

\subsection{Annealed Importance Sampling with Common Random Numbers}
\label{sec:ais}
Among partition-function estimators we use annealed importance sampling
(AIS)~\cite{neal2001}, which interpolates from the uniform distribution
$\pi_0(\vu)\propto1$ (with known $\Z_0=2^{m_X}$) to the target
$\pi_1(\vu)\propto w^{E(\vu)}$, $E(\vu)=|\vr\oplus\vu^{\mathsf{T}}H_X|$,
along the geometric path $\pi_\beta\propto w^{\beta E}$ with inverse
temperatures $0=\beta_0<\beta_1<\dots<\beta_T=1$. A chain is initialized
uniformly and, at each stage, accumulates the importance increment
$(\beta_t-\beta_{t-1})\,E(\vu)\ln w$ before one Metropolis sweep over the
check variables at $\beta_t$ (flipping $u_c$ toggles the parities of the
qubits in check $c$, so the energy change is local and cheap). The resulting
weight $W$ satisfies $\mathbb{E}[W]=\Z_{\vlam}$ \emph{exactly}, for any $T$;
averaging $K$ independent chains gives the unbiased estimate
$\hat\Z_{\vlam}$. The decision~\eqref{eq:cosetZ} then reduces to an argmax
over the $2^k$ estimates, or, when $k$ is large, over a candidate set
centered at the BP+OSD solution (identity, all single-, and all
double-logical shifts; $1+k+\binom{k}{2}$ classes), which recovers the exact
decision whenever BP+OSD errs by at most two logical operators.

The decision depends only on \emph{ratios} of partition functions. All
class representatives share the same coupling structure and differ only in
the unary offsets $r_v$ in~\eqref{eq:cosetpot}, so we run AIS for all
classes with \emph{common random numbers}~\cite{robert2004} (identical
initial states and identical per-step uniform variates), making the
estimates $\hat\Z_{\vlam}$ positively correlated, so that
\begin{equation}
\mathrm{Var}\big(\hat\Z_{\vlam}-\hat\Z_{\vlam'}\big)
=\mathrm{Var}\,\hat\Z_{\vlam}+\mathrm{Var}\,\hat\Z_{\vlam'}
-2\,\mathrm{Cov}\big(\hat\Z_{\vlam},\hat\Z_{\vlam'}\big)
\label{eq:crn}
\end{equation}
falls below its independent-randomness value, by up to a factor of two in
variance in our experiments (Section~\ref{sec:exp1}). The deeper benefit of
the coupling is structural: it turns the class comparison into a
\emph{paired} comparison on a shared randomness stream, which is what the
per-decision certificates below are built on.

\subsection{Certificates}
\label{sec:cert}
A decoder that estimates rather than computes should say when it is sure.
We provide two mechanisms.

\begin{lemma}[Certified ML decision from constant-factor estimates]
\label{lem:cert}
Suppose $\hat\Z_{\vlam}\in[\Z_{\vlam}/\kappa,\ \kappa\,\Z_{\vlam}]$ for all
$\vlam$ and some $\kappa\ge1$. If
$\hat\Z_{\hat{\vlam}}>\kappa^2\,\hat\Z_{\vlam}$ for all
$\vlam\neq\hat{\vlam}$, then $\hat{\vlam}$ is the exact degenerate ML class.
\end{lemma}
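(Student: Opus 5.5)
The plan is to sandwich each true partition function between its estimate scaled by $\kappa$, and then chain the inequalities so that the hypothesis on the estimates transfers to the true values. First, I would invert the two-sided guarantee $\hat\Z_{\vlam}\in[\Z_{\vlam}/\kappa,\ \kappa\,\Z_{\vlam}]$ into bounds on the unknown quantities. From the upper part, $\hat\Z_{\hat\vlam}\le\kappa\,\Z_{\hat\vlam}$, so
\[
\Z_{\hat\vlam}\ge\hat\Z_{\hat\vlam}/\kappa .
\]
From the lower part, $\hat\Z_{\vlam}\ge\Z_{\vlam}/\kappa$, so
\[
\Z_{\vlam}\le\kappa\,\hat\Z_{\vlam}
\]
for every $\vlam$.

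Next, fix an arbitrary $\vlam\neq\hat\vlam$ and chain these with the hypothesis $\hat\Z_{\hat\vlam}>\kappa^2\hat\Z_{\vlam}$:
\[
\Z_{\hat\vlam}\;\ge\;\frac{\hat\Z_{\hat\vlam}}{\kappa}\;>\;\kappa\,\hat\Z_{\vlam}\;\ge\;\Z_{\vlam}.
\]
This gives the strict inequality $\Z_{\hat\vlam}>\Z_{\vlam}$ for all competitors. Hence $\hat\vlam$ is the unique maximizer of $\Z_{\vlam}(\vs)$ over $\F_2^k$, which by the definition of the degenerate ML decoder in~\eqref{eq:cosetZ} makes it the exact degenerate ML class. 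The redundant-row multiplicity noted after~\eqref{eq:cosetZ} is a common positive factor, so it does not affect the argmax.

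There is no substantive obstacle. The only points needing care are using the correct direction of the $\kappa$-bracket for the winner versus the competitors, and keeping the strict inequality so that ties are excluded. Strict positivity of all $\Z_{\vlam}$, guaranteed by Proposition~\ref{prop:cosetmrf}, ensures that the multiplicative bounds are meaningful and that dividing by $\kappa\ge1$ preserves the inequalities. If the estimator's guarantee holds only with some probability (as for WISH), I would add a remark that the certificate then holds on that same event, applying the deterministic argument above conditionally.
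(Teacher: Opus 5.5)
Your proof is correct and is exactly the paper's argument: you invert the $\kappa$-bracket in opposite directions for the winner and for each competitor, then chain $\Z_{\hat\vlam}\ge\hat\Z_{\hat\vlam}/\kappa>\kappa\,\hat\Z_{\vlam}\ge\Z_{\vlam}$. Your appeal to strict positivity of the $\Z_{\vlam}$ is harmless but unnecessary, since the chain uses only $\kappa>0$.
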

\begin{proof}
$\Z_{\hat{\vlam}}\ge\hat\Z_{\hat{\vlam}}/\kappa
 >\kappa\,\hat\Z_{\vlam}\ge\Z_{\vlam}$ for every competitor $\vlam$.
\end{proof}

Lemma~\ref{lem:cert} is designed to compose with estimators that come with
relative guarantees. The WISH estimator~\cite{ermon2013} returns every
$\Z_{\vlam}$ within a factor $\kappa=16$ (with probability $1-\delta$) using
only $O(m_X\log m_X)$ MAP queries under random parity constraints, and each
such query on the coset MRF is a quadratic unconstrained binary optimization
(QUBO) problem, the native input of a quantum annealer~\cite{kadowaki1998}.
A decoding architecture in which an annealer serves as the MAP oracle of a
WISH-certified qLDPC decoder is therefore a concrete, if currently
speculative, possibility; Section~\ref{sec:exp1} validates the WISH route
with an exact oracle.

For the AIS decoder we use a cheaper statistical certificate. With CRN, the
per-chain weights of the best class and of each competitor form \emph{paired}
samples $(W^{(i)}_{\hat{\vlam}},W^{(i)}_{\vlam})$ whose mean difference is
$\Z_{\hat{\vlam}}-\Z_{\vlam}$. The naive test, a one-sided paired $t$-test
on the raw weight differences, is crippled by the heavy right tail of AIS
weights: on large codes its standard error is dominated by the largest
weight and it fails to certify even decisions whose true class gap is tens
of nats. The operational certificate is instead a \emph{paired bootstrap}:
resample the $K$ chains (the same indices for all classes, preserving the
CRN pairing), recompute the log-ratio
$\log\hat\Z_{\hat{\vlam}}-\log\hat\Z_{\vlam}$ on each resample, and certify
if the Bonferroni-corrected lower percentile exceeds zero for every
competitor. The bootstrap statistic is a smooth functional of the weights
rather than their raw mean, and the improvement is dramatic: on the gross
code the $t$-test certifies almost nothing while the bootstrap certifies
the large majority of decisions at identical sampling cost
(Section~\ref{sec:exp3}). Both certificates are asymptotic in $K$; their
calibration against exact ground truth is measured in
Section~\ref{sec:exp1}. Certified decisions can be trusted; uncertified
ones identify exactly the syndromes deserving a slower second look, a
triage capability no standard qLDPC decoder currently offers.

\paragraph{Two sharper ratio estimators that fail, instructively.}
Since only \emph{ratios} $\Z_{\vlam}/\Z_{\hat{\vlam}}$ matter, one is
tempted to estimate them directly. We implemented two textbook candidates.
\emph{Inter-class annealing} shares one anneal to the reference class and
then bridges to each competitor at $\beta=1$ by interpolating the unary
offsets on the support of the logical shift; the composite path is a valid
AIS sequence, unbiased for every $\Z_{\vlam}$. \emph{Bennett's acceptance
ratio}~\cite{bennett1976} post-processes the equilibrated endpoints of the
per-class anneals, which differ only on the shift support. Both are an
order of magnitude \emph{worse} than the plain paired difference of
per-class anneals (Section~\ref{sec:exp1}): the two cosets' typical sets
are nearly disjoint at low temperature, so the direct bridge crosses a
free-energy bottleneck that the per-class path through infinite temperature
simply walks around, and the same lack of overlap starves BAR of usable
cross-energy samples. The lesson is structural: for coset comparison, the
infinite-temperature manifold \emph{is} the bridge, and sharpening should
target the test statistic (the bootstrap above), not the path.

\paragraph{Related work.}
Degeneracy-aware decoders exist for the surface code, exploiting its planar
structure: tensor-network contraction of the coset sum~\cite{bravyi2014ml}
and single-temperature Metropolis sampling over stabilizer
deformations~\cite{hutter2014}. Renormalization-group and matrix-product
decoders inherit similar geometric requirements. Our construction differs in
that Proposition~\ref{prop:cosetmrf} requires only CSS structure and
sparsity (it covers expander-like codes where no planar contraction order
exists), and in that the estimators we employ carry statistical guarantees
that convert into per-decision certificates via Lemma~\ref{lem:cert}. On the
classical side, the coset MRF is the natural quantum generalization of the
weight-enumerator MRFs of classical coding theory; we are not aware of prior
work using guarantee-carrying partition-function estimators (AIS, WISH) as
qLDPC decoders.

\subsection{A Region-Based Surrogate: Bethe Decoding}
\label{sec:bethe}
Sampling is not the only route to $\log\Z_{\vlam}$. The coset MRF is an
ideal target for the region-based free-energy approximations of
Section~\ref{sec:pgm}: its potentials~\eqref{eq:cosetpot} depend on their
arguments only through a \emph{parity}, so sum-product messages close in
the difference parametrization $d=\mu(0)-\mu(1)$ (the classical ``tanh
rule''). A qubit factor $v$ contributes
$\delta_v=(-1)^{r_v}\,\frac{1-w}{1+w}$, factor-to-variable messages
multiply the incoming differences of the other checks in $N(v)$ by
$\delta_v$, and variable-to-factor messages combine as
$\tanh\!\big(\sum\mathrm{artanh}\,d\big)$. At a BP fixed point the Bethe
free energy $F_{\mathrm{B}}$ is evaluated from the beliefs in closed form,
and $-F_{\mathrm{B}}(\vlam)$ serves as a variational surrogate for
$\log\Z_{\vlam}$: the \emph{Bethe decoder} returns the class of minimal
Bethe free energy.

On a loopy graph $-F_{\mathrm{B}}$ is a biased estimate of $\log\Z$, and
this is where the class structure helps: all $2^k$ coset MRFs share the
same graph and couplings and differ only in the unary signs $r_v$, so the
Bethe bias is strongly correlated across classes and largely cancels in the
\emph{differences} that determine the decision, the variational analogue
of common random numbers. Empirically the cancellation is almost perfect
(Section~\ref{sec:exp1}): on surface codes the Bethe class ratios track the
exact ones to $\approx0.1$ nats and the Bethe decoder reproduces the exact
ML decision on every tested instance, at a cost of milliseconds for all
$79$ candidate classes of a BB code.

Two caveats delimit the surrogate. First, BP on the coset MRF has multiple
fixed points at strong coupling ($w\to0$): a \emph{paramagnetic} fixed
point ($d\equiv0$) always exists and its free energy is
class-independent: a decoder stuck there ranks classes by numerical noise.
Initializing the messages polarized toward the representative
configuration $\vu=0$ selects the ordered branch instead; which branch BP
reaches, and whether it converges at all, depends on the noise rate, and we
report both initializations. Second, the Bethe approximation ignores the
short cycles of the coset graph. Interestingly, these are \emph{not} the
CSS four-cycles of the syndrome Tanner graph: we verified that in the
coset graphs of all our codes no two qubits share two checks, so the
girth-limiting cycles have length six and larger. Enlarging the regions
beyond factor scopes, the Kikuchi/cluster-variation
hierarchy~\cite{yedidia2005}, is the systematic cure; the next subsection
realizes it through cluster elimination.

\subsection{Kikuchi-Level Regions: Cluster Elimination, Exact ML, and
Interval Certificates}
\label{sec:regions}
The Bethe surrogate uses the smallest possible regions. To enlarge them we
work with the elimination-based face of the region hierarchy: \emph{bucket
elimination} along a variable ordering multiplies all factors mentioning
the current variable and sums it out, which is exact junction-tree
inference with clusters given by the induced elimination
scopes~\cite{lauritzen1988}; capping the cluster size at $i$ variables and
splitting oversized buckets gives \emph{mini-bucket elimination}
MBE($i$)~\cite{dechter2003}, whose regions grow with $i$ and recover
exactness at the treewidth. On the coset MRF this hierarchy is unusually
powerful, for a structural reason: the coset graph of the
$[\![72,12,6]\!]$ BB code has just $36$ variables and induced width
$\approx23$ under a min-fill ordering. Exact evaluation of
$\log\Z_{\vlam}$, and hence \emph{exact degenerate ML decoding of a
quantum LDPC code}, is therefore feasible at a few seconds per class,
something previously available only for surface codes via planar
contraction~\cite{bravyi2014ml}. The gross code (width $\approx40$)
remains out of exact reach and is served by the bounded-cluster levels.

Mini-buckets do more than approximate: summing out in one mini-bucket and
maximizing in the others yields an \emph{upper bound} $U_{\vlam}\ge
\log\Z_{\vlam}$, while the log-potential of any single coset
configuration (we use a greedy generator descent from the
representative) yields a \emph{lower bound} $L_{\vlam}$. Two consequences
follow immediately. First, a \emph{deterministic} optimality certificate:
if $L_{\hat{\vlam}} > U_{\vlam}$ for every competitor, the decision is
provably the ML class: no sampling, no confidence level. Second,
\emph{provable pruning}: any class with $U_{\vlam} < \max_{\vlam'}
L_{\vlam'}$ cannot be optimal and is discarded exactly, which lets the
exact junction-tree decoder run on the handful of surviving classes
instead of all candidates; in practice this reduces exact ML decoding of
the $[\![72,12,6]\!]$ code to seconds per syndrome, and makes even the
\emph{global} optimum over all $2^{12}$ classes computable
(Section~\ref{sec:exp5}).

\subsection{Noisy Syndromes: Spacetime Decoding Problems}
\label{sec:spacetime}
Everything so far assumed perfect syndrome extraction, but nothing in the
construction depends on it. Consider the general decoding problem: binary
error mechanisms $\vx\in\F_2^{N}$ with independent priors $p_v$, a detector
matrix $H$ with observed outcome $\vs=H\vx$, and a logical matrix $L$ whose
value $L\vx$ is the quantity a decoder must recover. Code-capacity CSS
decoding is the special case $(H,L)=(H_Z,L_Z)$; a \emph{phenomenological}
model (fresh data errors and measurement flips over $r$ rounds, final
perfect round) and a stim-style \emph{circuit-level} detector error model
are others. The role of the stabilizer group is played by the
\emph{trivial kernel} $\{\vx: H\vx=0,\ L\vx=0\}$: with $S$ a basis of it
and per-variable weights $w_v=p_v/(1-p_v)$, Proposition~\ref{prop:cosetmrf}
holds verbatim: each class probability is the partition function of an
unconstrained positive MRF over the generator variables, with clique
potentials $\psi_v\in\{w_v,1\}$ indexed by the columns of $S$.

Three practical devices make the general decoder work at circuit level.
\emph{(i)~Sparse generator bases.} A generic kernel basis is dense, and a
dense generator freezes the Metropolis dynamics (each move crosses an
enormous energy barrier). For structured models a sparse basis is explicit
(per-round code stabilizers plus data-measurement pairs); for detector
error models a greedy pairwise reduction sparsifies the basis to mean
weight $\approx3$.
\emph{(ii)~Lightened representatives.} Class representatives obtained from
generic kernel vectors are heavy, which biases the annealing difficulty
between classes; greedily reducing each representative to a local
weighted-weight minimum within its coset makes $\vu=0$ the local mode of
every class MRF, at no cost to exactness.
\emph{(iii)~A concentrated base measure.} Annealing from the uniform
distribution pays a path of $m\log2$ nats in free energy; for spacetime
problems ($m\approx200$--$600$ generators) this path dwarfs the few-nat
class gaps and the estimator drowns in its own variance; we observed both
the circuit-level and the phenomenological models fail completely in this
mode, with near-random decisions and $0\%$ certified. Annealing instead
from an i.i.d.\ Bernoulli$(q_0)$ product base concentrated near $\vu=0$
(exactly normalized, so unbiasedness is untouched) shortens the path to a
few nats and restores both accuracy and certification. The deciding
quantity is the path length, i.e.\ the generator count: the small coset
MRFs of code-capacity decoding ($m\le72$) tolerate the uniform base, while
spacetime problems require $q_0\approx0.02$; we calibrate on instances
with exact ground truth, and the certificate rate acts as an on-line
warning when the base is mismatched. With these three devices the
certified decoder runs at circuit level with sub-second decodes
(Section~\ref{sec:exp4}).

\section{Experiments}
\label{sec:experiments}
Four experiments evaluate the method, all run on a 64-core commodity
server. Experiments 1--3 use code-capacity independent bit-flip noise at
rate $p$ and decode $X$-errors from the $Z$-syndrome; Experiment 4 adds
faulty syndrome extraction (phenomenological and circuit-level noise).
Reported logical error rates are \emph{block} error rates (any nontrivial
residual logical class counts as failure). Baselines are plain BP
(product-sum), BP+OSD-0 and BP+OSD-CS (combination-sweep, order~7) from the
\texttt{ldpc} package~\cite{roffe2020}, and minimum-weight perfect matching
(PyMatching~\cite{higgott2022}) where applicable. The AIS decoder uses
$T=32$ annealing sweeps and $K=128$ chains per class on the surface codes
and $T=K=128$ on the BB codes (Section~\ref{sec:exp3} reports a
sweep-budget ablation); certificates are the paired bootstrap at
$\delta=0.05$ unless stated otherwise. All estimator-based decoders are
seeded from the BP+OSD-0 solution $\ve_0$. Code, scripts, and raw results
are available upon request.

\subsection{Estimator Validation Against Exact Coset Sums}
\label{sec:exp1}
On the rotated surface code of distance $5$ ($n=25$, $m_X=12$), the coset
sums~\eqref{eq:cosetZ} can be enumerated exactly ($2^{12}$ terms per class),
providing ground truth for every component of the method. For $200$
syndromes sampled at each $p\in\{0.05,0.10\}$ we compute exact and estimated
$\log\Z_{\vlam}$ for both logical classes, varying the number of AIS sweeps
$T\in\{4,8,16,32,64\}$; we compare the error of the class log-ratio under
CRN against independent randomness; we measure the calibration of both
certificates ($\delta=0.05$); we evaluate the Bethe surrogate of
Section~\ref{sec:bethe} against the exact free energies; and we run WISH
with an exact MAP oracle to verify its constant-factor guarantee
end-to-end.

\begin{figure}[t]
\centering
\includegraphics[width=\linewidth]{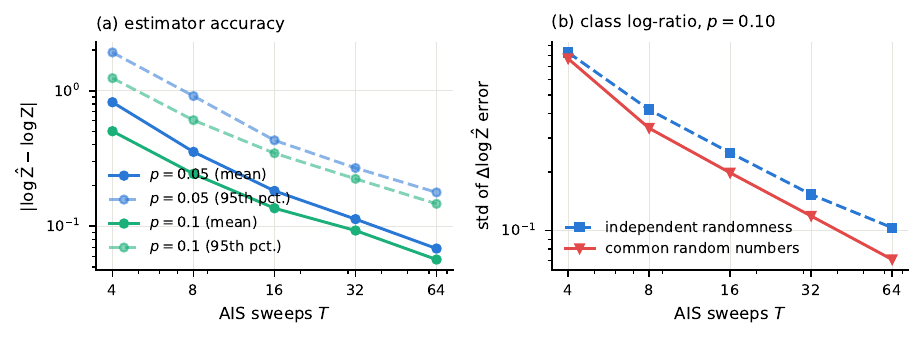}
\caption{Estimator validation on the rotated surface code $d=5$ against
exact coset sums ($200$ syndromes per point, $K=128$ chains).
(a)~Absolute error of $\log\hat\Z_{\vlam}$ versus the number of AIS sweeps
$T$: the error decays as roughly $T^{-1}$ and is $\approx0.1$\,nats at
$T=32$. (b)~Standard deviation of the error of the class log-ratio
$\log\hat\Z_{0}-\log\hat\Z_{1}$: common random numbers dominate independent
randomness uniformly in $T$ (variance reduction $1.2\times$--$2.2\times$,
growing with $T$).}
\label{fig:exp1}
\end{figure}

Figure~\ref{fig:exp1} summarizes the accuracy results. The AIS estimate of
each class log-partition function converges cleanly: the mean absolute
error falls from $0.82$ nats ($T=4$, $p=0.05$) to $0.068$ nats ($T=64$),
with the $95$th percentile below $0.18$ nats, to be compared with class
log-ratios that are typically of order one or larger. CRN reduces the
variance of the ratio error at every sweep count, by a factor
$1.2$--$2.2$ in variance. Both certificates ($\delta=0.05$, $T=32$)
fired on \emph{all} $400$ decisions ($200$ syndromes at each of
$p\in\{0.05,0.10\}$), and every certified decision agreed with the exact ML
class, an empirical error of zero, consistent with the nominal level.
The Bethe surrogate performs remarkably: its class log-ratio deviates from
the exact one by $0.10$ nats on average (the same accuracy as AIS at
$T=32$, obtained in a few milliseconds without any sampling), and its
decision agrees with exact ML on all $400$ syndromes, confirming that the
Bethe bias cancels between classes. The failed ratio estimators of
Section~\ref{sec:cert} are quantified on the same instances: at matched
compute, direct inter-class annealing inflates the ratio-error standard
deviation from $0.15$ to $1.47$ nats ($p=0.05$) and BAR to $1.67$
nats, an order of magnitude worse than the coupled per-class anneals they
were meant to sharpen.
Finally, WISH with an exact MAP oracle estimated all $400$ coset partition
functions within a maximal absolute log-error of $1.56$ (mean $0.43$),
comfortably inside its guaranteed factor $16$ ($\log 16\approx2.77$): by
Lemma~\ref{lem:cert}, any class gap exceeding $\kappa^2=256$ is thereby
\emph{provably} decided. At $p=0.05$ this condition holds for $79\%$ of
syndromes (median exact class gap $7.0$ nats); at $p=0.10$ the gaps shrink
(median $2.7$ nats) and the worst-case factor certifies only $9\%$,
although the \emph{observed} WISH accuracy ($\kappa_{\mathrm{emp}}\approx
e^{1.56}$) would certify $48\%$, illustrating the usual looseness of
worst-case constants.

\subsection{Surface Codes: Matching the Exact ML Decoder}
\label{sec:exp2}
On rotated surface codes of distance $3$ and $5$ we compare, on identical
error samples, plain BP, BP+OSD-0, MWPM, the \emph{exact} degenerate ML
decoder (full coset enumeration; the optimum any decoder can achieve), and
the AIS decoder, across
$p\in\{0.02,0.04,0.06,0.08,0.10,0.12\}$ with $2\times10^4$ trials per point
($2\times10^3$ for the AIS decoder).

\begin{figure}[t]
\centering
\includegraphics[width=\linewidth]{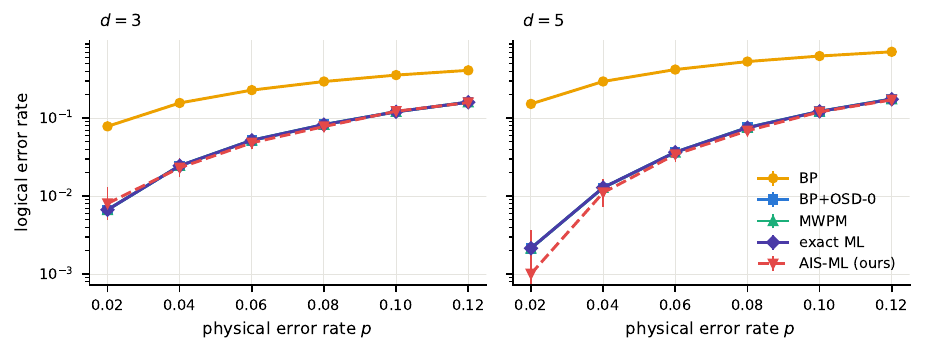}
\caption{Logical error rate versus physical bit-flip rate on rotated
surface codes ($d=3$ left, $d=5$ right; Wilson $95\%$ intervals). The
BP+OSD-0, MWPM, and exact-ML curves coincide to within statistical
resolution at these distances; the AIS decoder (dashed) reproduces the
\emph{decisions} of the exact degenerate ML decoder almost perfectly (see
text). Plain BP fails at all rates, as expected from the degenerate
posterior of Section~\ref{sec:decoding}.}
\label{fig:exp2}
\end{figure}

Figure~\ref{fig:exp2} shows the logical error rates. Three observations
matter. First, the AIS decoder is statistically indistinguishable from the
exact degenerate ML decoder: comparing decisions on \emph{identical}
syndromes ($10^3$ per point), the two agree on every single trial at $d=3$
(all $p$) and at $d=5$ up to $p=0.06$; at $p\in\{0.08,0.10,0.12\}$ they
differ on at most $0.3\%$ of decisions, with no measurable difference in
logical error rate. The
certificate again tracks reality: the certified fraction is $100\%$ for
$p\le0.06$ and decreases only to $98\%$ at $p=0.12$, where class gaps
become genuinely small. Second, the Bethe decoder (not plotted, because
its curve would lie on top of the exact-ML curve) produces failure counts
identical to exact ML at every $d=3$ point and within $0.03\%$ at $d=5$
across the $2\times10^4$-trial grid, at millisecond cost: on surface codes
the variational surrogate is, for all practical purposes, the ML decoder.
Third, at these code-capacity parameters
degeneracy buys little on the surface code (BP+OSD-0, MWPM, and exact ML
coincide), which is consistent with known results at low
distance~\cite{bravyi2014ml, hutter2014} and is precisely why the
qLDPC regime of the next experiments, where no exact reference exists and
BP+OSD is uncontrolled, is where a certified decoder is actually needed.
Plain BP, by contrast, fails on $8\%$--$71\%$ of syndromes, the known
signature of the degenerate posterior~\eqref{eq:posterior}.

\subsection{Bivariate Bicycle Codes}
\label{sec:exp3}
For the BB codes $[\![72,12,6]\!]$ and $[\![144,12,12]\!]$ exact ML decoding
by coset enumeration is out of reach ($2^{36}$ resp.\ $2^{72}$ terms), so
this experiment compares against the BP+OSD family, with the AIS and Bethe
decoders searching the $1+12+\binom{12}{2}=79$ candidate classes around
the BP+OSD-0 solution; Section~\ref{sec:exp5} then revisits the
$[\![72,12,6]\!]$ code with the \emph{exact} junction-tree reference of
Section~\ref{sec:regions}. We
report block logical error rates, the certified fractions of both
certificates on identical runs, the win/loss balance on the instances where
AIS and BP+OSD-0 disagree, and wall-clock times.

\begin{figure}[t]
\centering
\includegraphics[width=\linewidth]{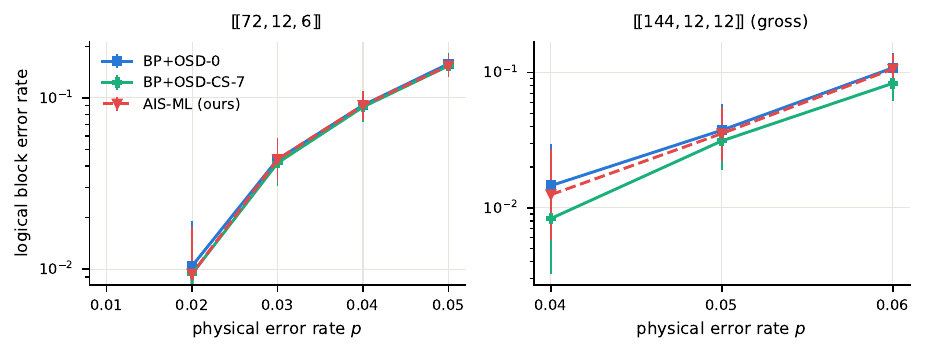}
\caption{Block logical error rates on the bivariate bicycle codes
(Wilson $95\%$ intervals; zero-failure points at $p=0.01$ omitted from the
log axis, see Table~\ref{tab:exp3}). On $[\![72,12,6]\!]$ the three
decoders are statistically indistinguishable; on the gross code
BP+OSD-CS-7 retains an edge over both BP+OSD-0 and the (candidate-restricted)
AIS decoder.}
\label{fig:exp3}
\end{figure}

\begin{table*}[t]
\centering
\caption{Bivariate bicycle codes under code-capacity noise: block
logical error rates of the BP+OSD baselines, the Bethe decoder
(uniform and polarized initialization), and the AIS decoder
($T=K=128$); certified fractions of the paired bootstrap and the
paired $t$-test on identical runs; repaired/broken decisions
relative to BP+OSD-0, also for the under-resolved ablation
($T=32$, $K=64$). Trials per point: 960 ($[\![72,12,6]\!]$) and 480 ($[\![144,12,12]\!]$).}
\label{tab:exp3}
\begin{tabular}{@{}llllllllllr@{}}
\toprule
 & & & & \multicolumn{2}{c}{Bethe} & & \multicolumn{2}{c}{certified} & \multicolumn{2}{c}{fix/break} \\
\cmidrule(lr){5-6}\cmidrule(lr){8-9}\cmidrule(l){10-11}
Code & $p$ & OSD-0 & OSD-CS & unif. & polar. & AIS-ML & boot & $t$ & $T{=}128$ & $T{=}32$ \\
\midrule
$[\![72,12,6]\!]$ & 0.01 & 0.0000 & 0.0000 & 0.8375 & 0.0990 & 0.0000 & 100\% & 58\% & 0/0 & 0/0 \\
 & 0.02 & 0.0104 & 0.0094 & 0.9990 & 0.2385 & 0.0094 & 100\% & 75\% & 1/0 & 1/0 \\
 & 0.03 & 0.0437 & 0.0417 & 0.0427 & 0.2927 & 0.0437 & 99\% & 88\% & 0/0 & 0/1 \\
 & 0.04 & 0.0906 & 0.0885 & 0.0906 & 0.3406 & 0.0906 & 98\% & 89\% & 2/2 & 3/1 \\
 & 0.05 & 0.1583 & 0.1542 & 0.3396 & 0.4990 & 0.1542 & 98\% & 92\% & 5/1 & 3/2 \\
\midrule
$[\![144,12,12]\!]$ & 0.04 & 0.0146 & 0.0083 & 0.0146 & 0.0292 & 0.0125 & 99\% & 0\% & 1/0 & 1/1 \\
 & 0.05 & 0.0375 & 0.0312 & 0.1125 & 0.0604 & 0.0354 & 98\% & 0\% & 1/0 & 1/3 \\
 & 0.06 & 0.1083 & 0.0833 & 0.1083 & 0.1417 & 0.1062 & 94\% & 1\% & 2/1 & 1/8 \\
\bottomrule
\end{tabular}
\end{table*}

Table~\ref{tab:exp3} and Figure~\ref{fig:exp3} contain the results; four
observations. \emph{First}, the AIS decoder is never worse than BP+OSD-0:
on $[\![72,12,6]\!]$ it repairs more decisions than it damages (eight
versus three across all rates) and reaches the stronger BP+OSD-CS-7 at
$p\in\{0.02,0.05\}$; on the gross code it consistently edges out BP+OSD-0
(four repairs, one break) but does not reach BP+OSD-CS-7 within its
two-logical candidate set. Per-point differences are within statistical
error, indicating that BP+OSD is already close to ML for these codes under
code-capacity noise, a statement that, absent an exact reference, only a
decoder of the present kind can support. \emph{Second}, the certificate:
the paired bootstrap certifies $94\%$--$100\%$ of decisions across
\emph{both} codes and all rates ($\delta=0.05$, Bonferroni over $78$
competitors), where the paired $t$-test on the same runs collapses from
$58\%$--$92\%$ on $[\![72,12,6]\!]$ to essentially $0\%$ on the gross
code: the heavy tail of the AIS weights at $m_X=72$ destroys the
$t$-statistic while leaving the bootstrap log-ratios intact, exactly the
failure mode Section~\ref{sec:cert} anticipates. (A certified decision
asserts ML-optimality of the \emph{class}, not decoding success: at
$p=0.05$ about $14\%$ of certified decisions still fail, matching the
irreducible ML failure rate; the certificate's calibration against exact
ground truth was established in Section~\ref{sec:exp1}.) \emph{Third}, the
Bethe decoder on BB codes is regime-sensitive, unlike on surface codes.
With uniform initialization it is catastrophic at $p\le0.02$ (the
class-degenerate paramagnetic fixed point of
Section~\ref{sec:bethe}), then matches the BP+OSD family in a middle
window ($0.0427$ versus OSD-CS's $0.0417$ at $p=0.03$) before degrading
again at $p=0.05$; the polarized initialization covers the low-$p$ regime
($0.099$ versus $0.838$ at $p=0.01$) but is weaker elsewhere. At
$22$--$26$\,ms for all $79$ classes it is a candidate fast filter, but on
BB codes the Bethe rung of the region hierarchy is not yet a standalone
decoder; this is the sharpest concrete motivation for Kikuchi regions we
know.
\emph{Fourth}, the sweep-budget ablation is cautionary: at $T=32,K=64$ the
estimator is noisy enough to \emph{break} BP+OSD decisions it should
confirm (eight breaks versus one repair on the gross code at $p=0.06$),
while $T=K=128$ eliminates the imbalance ($13$--$30$\,s per decode on one
core). An under-resolved partition-function decoder is worse than no
partition-function decoder; the certificate rate is the built-in warning
light.

\subsection{Noisy Syndromes: Phenomenological and Circuit-Level Noise}
\label{sec:exp4}
The final experiment exercises the spacetime generalization of
Section~\ref{sec:spacetime}. \emph{Part A} is true circuit-level noise on
the rotated surface code: stim's standard depolarizing memory-$Z$
experiment at distance $3$ with $3$ rounds, whose flattened detector error
model yields a decoding problem with $\approx220$ heterogeneous error
mechanisms and a sparsified trivial-kernel basis of $\approx190$
generators. Baselines are MWPM on the detector error model and BP+OSD-0 on
its check matrix with the per-mechanism priors; our decoder uses $T=64$,
$K=128$, and the concentrated base $q_0=0.02$. \emph{Part B} is
phenomenological noise on the BB code $[\![72,12,6]\!]$: $r=6$ noisy
rounds (independent data and measurement errors at equal rate $p$)
followed by a perfect readout round, giving $648$ error variables, $252$
detectors, and a structurally sparse stabilizer basis of $576$ generators;
the AIS decoder ($T=K=64$, $q_0=0.02$) searches the $1+12$ single-logical
candidate classes around the spacetime BP+OSD-0 solution. \emph{Part C} is
true circuit-level noise on the $[\![72,12,6]\!]$ code itself: we port the
syndrome-measurement cycle of Bravyi \emph{et al.}~\cite{bravyi2024}
verbatim (their published seven-round CNOT schedule, one ancilla per
check) and enumerate all single-$X$-fault locations of the standard
depolarizing model ($9072$ locations for $6$ noisy cycles plus two
noiseless ones), propagating each fault through the circuit to obtain its
detector and logical signature. Merging identical signatures yields a
decoding problem with $2269$ mechanisms, $288$ detectors, and a sparsified
trivial-kernel basis of $1975$ generators (mean weight $4.1$); the AIS
decoder uses $T=K=64$ and $q_0=0.02$. The sampled noise follows the same
independent-component approximation of depolarizing noise as the reference
implementation, so decoder and channel are consistent.

\begin{table*}[t]
\centering
\caption{Noisy-syndrome decoding. Surface code: stim circuit-level
depolarizing noise ($d=3$, $3$ rounds; MWPM and BP+OSD-0 on
19980 shots, AIS on 600 shots, with the BP+OSD rate on those same shots for pairing).
BB code: phenomenological noise ($r=6$ rounds, measurement-error
rate $q=p$, 240 trials) and true
circuit-level noise (Bravyi et al.\ syndrome cycle, $6$ noisy
cycles, 240 trials). Certificates are the paired bootstrap
($\delta=0.05$); $t$ is wall-clock per AIS decode on one core.}
\label{tab:exp4}
\begin{tabular}{@{}lllllllr@{}}
\toprule
Problem & $p$ & MWPM & BP+OSD & AIS-ML & BP+OSD (paired) & certified & $t$ [s] \\
\midrule
surface $d{=}3$, circuit & 0.004 & 0.0134 & 0.0140 & 0.0133 & 0.0183 & 99\% & 0.8 \\
 & 0.008 & 0.0410 & 0.0391 & 0.0383 & 0.0383 & 100\% & 0.8 \\
 & 0.012 & 0.0785 & 0.0795 & 0.0767 & 0.0733 & 99\% & 0.8 \\
\midrule
$[\![72,12,6]\!]$, pheno. & 0.010 & -- & 0.0083 & 0.0083 & 0.0083 & 100\% & 14.2 \\
 & 0.020 & -- & 0.0333 & 0.0333 & 0.0333 & 100\% & 14.6 \\
 & 0.030 & -- & 0.2417 & 0.2417 & 0.2417 & 100\% & 15.0 \\
\midrule
$[\![72,12,6]\!]$, circuit & 0.002 & -- & 0.0125 & 0.0125 & 0.0125 & 100\% & 52 \\
 & 0.003 & -- & 0.0375 & 0.0375 & 0.0375 & 100\% & 52 \\
 & 0.004 & -- & 0.0917 & 0.0917 & 0.0917 & 100\% & 53 \\
\bottomrule
\end{tabular}
\end{table*}

Table~\ref{tab:exp4} contains the results. At circuit level the AIS
decoder is at least as accurate as both baselines at every rate while
certifying $99\%$--$100\%$ of its decisions at $0.8$\,s per decode: on the
paired shots it beats BP+OSD-0 clearly at $p=0.004$ ($0.0133$ versus
$0.0183$), ties it at $p=0.008$, and is statistically indistinguishable at
$p=0.012$. To our knowledge these are the first
degeneracy-aware, certificate-carrying decoding results under
circuit-level noise on any code. On the phenomenological BB problem the
result is of a different but equally useful kind: the AIS decoder
\emph{never} disagrees with spacetime BP+OSD-0 and certifies $100\%$ of
decisions at every rate, turning the heuristic's output into a certified
ML decision (within the single-logical candidate set) on $720$ out of
$720$ trials. The reference-decoder role advertised in the introduction is
thus concrete: BP+OSD on this code and noise model is not merely
plausible, it is certified optimal on essentially every syndrome it
sees. We also record a failed configuration for completeness: with the
uniform base measure ($q_0=\tfrac12$) the same phenomenological decoder
breaks $58\%$--$64\%$ of BP+OSD's correct decisions and certifies
none: the path-length failure anticipated in
Section~\ref{sec:spacetime}, and a reminder that the certificate rate is
the practical mismatch alarm.

The circuit-level BB results (Table~\ref{tab:exp4}, bottom block) complete
the picture. Logical block error rates scale as expected for a distance-6
code under circuit noise ($0.003$ at $p=0.001$ rising to $0.34$ at
$p=0.006$ in a preparatory BP+OSD scan; we run the full comparison at
$p\in\{0.002,0.003,0.004\}$), and on all $720$ decoded syndromes the AIS
decoder \emph{never} disagrees with spacetime BP+OSD-0 while certifying
$100\%$ of the decisions at $52$\,s per decode. Under the very noise model
and syndrome schedule in which these codes are proposed to
operate~\cite{bravyi2024}, BP+OSD-0 on the $[\![72,12,6]\!]$ code is thus
not merely a good heuristic: within the single-logical candidate set, its
decisions are certified maximum-likelihood on every syndrome we
examined.

\subsection{Region Hierarchy and Exact ML on BB Codes}
\label{sec:exp5}
The final experiment exercises Section~\ref{sec:regions} on the
$[\![72,12,6]\!]$ code (code-capacity noise, $p\in\{0.03,0.05\}$, $120$
syndromes per rate) and the gross code. The exact junction-tree decoder
with mini-bucket pruning provides, for the first time on a qLDPC code, an
exact degenerate-ML reference against which every other decoder of this
paper is scored: BP+OSD-0 and CS, both Bethe initializations, and the AIS
decoder with its bootstrap certificate, whose \emph{true} violation rate
(certified decisions that disagree with the exact ML class) is now
directly measurable on a BB code. We additionally compute the
\emph{global} ML class over all $2^{12}$ logical classes on a subsample
(mini-bucket screening plus exact evaluation of the survivors) to validate
the two-logical candidate restriction, sweep the region sizes
MBE($i$), $i\in\{8,12,16,20\}$, against the exact reference, and run
MBE($20$) with deterministic interval certificates on the gross code.

The exact reference changes what can be said. \emph{First}, exact
degenerate ML is now practical: with mini-bucket pruning ($110/120$ resp.\
$85/120$ syndromes decided by the intervals alone), exact decoding costs
$5$\,s ($p=0.03$) to $26$\,s ($p=0.05$) per syndrome, and at $p=0.03$ it
is strictly better than both BP+OSD variants ($0.0417$ versus $0.0500$):
BP+OSD is provably suboptimal on $2$ of $120$ syndromes there, while at
$p=0.05$ its errors happen to balance and the rates coincide.
\emph{Second}, the AIS decoder is validated at the strongest possible
standard: it agrees with exact ML on $238$ of $240$ syndromes, and among
its $231$ bootstrap-certified decisions there are \emph{zero} true
violations: the certificate's empirical error rate against exact ground
truth on a qLDPC code is $0/231$. The Bethe decoder agrees with exact ML
on $118/120$ at $p=0.03$ (uniform initialization) and degrades to
$87/120$ at $p=0.05$, quantifying precisely where the smallest regions
stop sufficing. \emph{Third}, the candidate restriction is validated
globally: on all tested syndromes the exact optimum over the full
$2^{12}$ classes (mini-bucket screening plus exact evaluation of
survivors) lies within the two-logical candidate set.
\emph{Fourth}, the region sweep: MBE($16$) and MBE($20$) reproduce the
exact decision on $40/40$ syndromes and \emph{deterministically certify}
$37$ resp.\ $38$ of them (machine-checkable optimality proofs at a
fraction of the exact cost), while MBE($8$) and MBE($12$) still decide
well ($38/40$, $26/40$) but certify nothing; the dip at $i=12$ is an
artifact of the greedy mini-bucket partition, whose cluster choices are
not monotone in $i$.

The gross code marks the hierarchy's current boundary, and we report it as
such: at induced width $\approx40$, the MBE($20$) upper bounds are loose
by far more than the class gaps, and ranking classes by them is
\emph{worse} than BP+OSD ($49$ of $60$ correct decisions broken, no
deterministic certificates, at $272$\,s per decode): a loose bound is not
a decision rule. Where the width exceeds the tractable cluster size by
this margin, the sampling decoder with its statistical certificate remains
the only degeneracy-aware option we can recommend; closing the gap with
tighter bounded-cluster bounds (weighted mini-buckets, join-graph
propagation) is a concrete open problem the coset MRF now poses to the
approximate-inference community.

\subsection{Certified Decoding of Hardware Syndromes: A Pilot}
\label{sec:exp6}
As a first step beyond simulation, we ran a pilot memory experiment on
IBM hardware (\texttt{ibm\_kingston}, Heron r2; median CZ error
$2.0\times10^{-3}$, readout $8\times10^{-3}$): the distance-3 rotated
surface code, memory-$Z$, three syndrome rounds with mid-circuit
measurement and reset, $10^4$ shots. The circuit is generated by stim
(fixing schedule, detectors, and the logical observable), transpiled to
the heavy-hex architecture (routing expands it to $26$ active qubits and
$273$ CZ gates), and converted back, gate by gate with per-gate
calibration errors attached, into a stabilizer noise model whose
flattened detector error model is the decoding problem ($413$ mechanisms,
$24$ detectors). Correctness of the round trip is verified by a
determinism check: the noiseless converted circuit fires no detector on
any sample. Decoding is performed under two models on identical data: the
\emph{calibration} model above, and a \emph{learned} model on the same
graph whose priors are re-fitted from the recorded detector statistics by
the standard two-point correlation estimator: MRF parameter estimation
from syndrome data, as anticipated in the outlook of earlier sections.

The pilot delivers three measurements, reported with the candor a pilot
deserves. \emph{First, the regime}: the routed circuit is far above
threshold: the measured mean detector rate is $0.229$ against the
calibration model's prediction of $0.076$, a factor-three mismatch
dominated by idle decoherence that the per-gate model omits. In this
regime decoding barely helps, as it cannot: the raw logical flip rate is
$0.253$ and the best decoder reaches $0.246$ (BP+OSD on $10^4$ shots;
AIS $0.260\pm0.025$ on a $300$-shot subsample, statistically
indistinguishable). \emph{Second, certificates on real data}: the AIS
decoder certifies $96\%$ of its decisions; correctly interpreted, these
assert ML-optimality \emph{under the assumed model}; the detector-rate
gap above, not the certificate, is what flags that the model itself is
off. \emph{Third}, re-fitting the priors from the recorded detector
statistics with the standard two-point estimator makes matters slightly
\emph{worse} ($0.253$), and the reason is structural: transpiled-circuit
error models are dominated by hyperedge mechanisms (three or more
detectors) that a pairwise estimator cannot see and must floor. Learning
hyperedge-rich detector error models from data, MRF parameter estimation
in earnest, is the concrete methodological gap this pilot isolates, and
the natural next step for the hardware program, together with an
idle-noise-aware calibration model.

\paragraph{First BB-code syndrome data.}
Finally, we ran the $[\![72,12,6]\!]$ code itself on hardware, to our
knowledge the first bivariate-bicycle syndrome data from any device. Full
syndrome cycles need degree-6 connectivity that no current processor
offers, but one round is well posed with only terminal measurements:
on \texttt{ibm\_berlin} (Nighthawk r1, 120-qubit square lattice, no
mid-circuit operations) we prepared $|0\rangle^{\otimes108}$ (72 data
qubits, 36 $Z$-ancillas), executed the six Tanner-monomial CNOT layers,
and measured everything once ($2\times10^4$ shots); final checks and all
twelve logical $\bar Z$'s are computed classically from the data readout.
The routed circuit (1422 CZ gates on dead-coupler-free edges; both
determinism audits pass) is deep, and the data sit near saturation: mean
detector rate $0.355$ (calibration model: $0.238$), raw per-logical flip
rate $0.439$. Decoding still extracts real structure (per-logical error
drops to $0.349$ and the block error rate from $0.983$ to $0.709$, the
large block-level gain revealing strongly \emph{correlated} logical
failures), and the AIS decoder certifies $100\%$ of its decisions while
agreeing with spacetime BP+OSD on all $200$ decoded syndromes ($8$\,s per
decode, 695-mechanism model). The experiment is a decoding study, not
error suppression: it establishes that the certified pipeline runs
end-to-end on real qLDPC syndrome data, and it marks the connectivity
frontier where hardware, not decoding, is the binding constraint.

\section{Discussion and Outlook}
\label{sec:discussion}
Degenerate maximum-likelihood decoding of CSS codes is partition-function
estimation: not metaphorically, but by the exact reduction of
Proposition~\ref{prop:cosetmrf}, which turns each logical class into an
unconstrained, strictly positive Markov random field over the code's check
variables and extends unchanged to spacetime decoding with faulty
measurements. This reframing makes three decades of
probabilistic-inference machinery directly applicable to the decoding
problem of the code families most likely to power early fault-tolerant
machines, and our experiments show that both of its main branches pay off.
The \emph{sampling} branch yields a decoder with a property no standard
qLDPC decoder offers: per-decision certificates, either statistical
(paired bootstrap on CRN-coupled anneals) or exact (Lemma~\ref{lem:cert}
composed with constant-factor estimators such as WISH). It tracks the
exact degenerate ML decoder on surface codes, matches the BP+OSD family on
bivariate bicycle codes, and certifies the bulk of its decisions from code
capacity through circuit-level noise, validated on the
$[\![72,12,6]\!]$ code against an exact reference, with $0$ true
violations among $231$ certified decisions. The \emph{region} branch
spans the hierarchy: the Bethe decoder reproduces exact ML on surface
codes at millisecond cost (a degeneracy-aware decoder in the latency
class of MWPM), and one level up, elimination clusters turn the modest
induced width of the $[\![72,12,6]\!]$ coset graph into the first exact
degenerate-ML decoder for a qLDPC code, strictly outperforming BP+OSD and
supplying deterministic mini-bucket certificates; the same construction's
failure at gross-code width ($49/60$ decisions broken by loose
MBE bounds) marks precisely where tighter bounded-cluster bounds are
needed. Equally informative are the failures we report: direct inter-class
bridging and BAR collapse on the low overlap between coset distributions,
establishing that the anneal through infinite temperature is the right
bridge and that certificate strength comes from the test statistic, not
from a shorter path.

The obvious open front for the sampling branch is speed: seconds per
decode versus the microsecond real-time loop of superconducting hardware.
We see three roles for it nonetheless. First, as a \emph{reference
decoder}: certified ML decisions provide the yardstick against which fast
heuristics should be measured on qLDPC codes, a role so far playable only
on the surface code; our Bethe and BP+OSD comparisons already use it this
way. Second, as a \emph{second-stage triage decoder}: the certificate
marks the small fraction of syndromes on which fast decoders are likely
wrong, and slower certified decoding of only those syndromes fits latency
budgets in memory experiments and offline analysis. Third, as a bridge to
\emph{annealing hardware}: WISH needs only MAP oracles, each a QUBO on the
coset MRF, so a quantum annealer can in principle serve as the inner loop
of a certified decoder, a code-hardware pairing complementary to the
platform discussion of~\cite{bravyi2024, xu2024}. The Bethe branch invites
the opposite program: it is fast enough for real-time use, and its two
open problems, fixed-point selection at strong coupling and
Kikuchi regions absorbing the CSS four-cycles, are classical
approximate-inference questions with a large existing
literature~\cite{yedidia2005}. Beyond decoding, the coset MRF makes
learned proposal distributions and parameter estimation of correlated
noise models from syndrome streams standard inference questions on a
well-defined model class.

\section*{Acknowledgments}
This research has been funded by the Federal Ministry of Research,
Technology and Space of Germany (BMFTR) and the state of North
Rhine-Westphalia as part of the Lamarr Institute for Machine Learning
and Artificial Intelligence.

\section*{Author contributions}
N.P.\ conceived the idea and wrote the initial draft. All authors
contributed to the writing and to the discussion of the results. Large
language models were used to optimize the text.

\bibliographystyle{unsrt}
\bibliography{references}

\end{document}